\newtheorem{proposition}{Proposition}
\newtheorem{lemma}{Lemma}
\newtheorem{theorem}{Theorem}
\newtheorem{definition}{Definition}
\def\A{\mathcal{A}}
\def\Y{\mathcal{Y}}
\def\B{\mathbf{B}}
\def\mb{\mathbin}
\begin{document}

\title{The Empirical Implications of Privacy-Aware Choice\thanks{We thank Jeff Ely, and especially Katrina Ligett, for useful comments. This work was supported by NSF grants 1101470, 1216006, and CNS-1254169, as well as BFS grant 2012348 and the Charles Lee Powell Foundation.}}
\author{Rachel Cummings\thanks{Electrical Engineering and Computer Science, Northwestern University. {\tt rachelc@u.northwestern.edu}} \and Federico Echenique\thanks{Humanities and Social Sciences, California Institute of Technology. {\tt fede@hss.caltech.edu}} \and Adam Wierman\thanks{Computing and Mathematical Sciences, California Institute of Technology. {\tt adamw@caltech.edu}}}

\maketitle

\begin{abstract}

This paper initiates the study of the testable implications of choice
data in settings where agents have privacy preferences.  We adapt the
standard conceptualization of consumer choice theory to a situation
where the consumer is aware of, and has preferences over, the
information revealed by her choices.  The main message of the paper is
that little can be inferred about consumers' preferences once we
introduce the possibility that the consumer has concerns about
privacy.  This holds even when consumers' privacy preferences are
assumed to be monotonic and separable.  This motivates the
consideration of  stronger assumptions and, to that end, we introduce
an additive model for privacy preferences that does have testable
implications.

\end{abstract}

\section{Introduction}
The purpose of this paper is to study what an observer can learn about a
consumer's preferences and behavior when the consumer has concerns for her
privacy and knows that she is being observed. The basic message of our
results is that very little can be learned without strong assumptions
on the form of the consumer's privacy preferences.

To motivate the problem under study, consider the following story.
Alice makes choices on the internet. She chooses which websites to
visit, what books to buy, which hotels to reserve, and which
newspapers to read. She knows, however, that she is being watched. An
external agent, ``Big Brother'' (BB), monitors her choices. BB could
be a private firm like Google, or a government agency like the NSA. As
a result of being watched, Alice is concerned for her privacy; and
this concern affects her behavior.

Alice has definitive preferences over the things she chooses
among. For example, given three political blogs $a$, $b$, and $c$, she may
prefer to follow $a$. But, BB will observe such a choice, and infer that
she prefers $a$ over $b$ and $c$. This is uncomfortable to Alice,
because her preferences are shaped by her political views, and she does
not like BB to know her views or her preferences. As a result, she
may be reluctant to choose $a$. She may choose $b$ instead because
she is more comfortable with BB believing that she ranks $b$ over $a$
and $c$.\footnote{Like Alice, 85\% of adult internet users have take
  steps to avoid surveillance by other people or organizations, see
  \cite{pewsurvey}.}

Now, the question becomes, given observations of Alice's behavior, what can we learn about her preferences? We might conjecture that her
behavior must satisfy some kind of rationality axiom, or that one
could back out, or reverse-engineer, her preferences from her
behavior. After all, Alice is a fully rational
consumer (agent), meaning  that she maximizes a utility function (or a transitive
preference relation). She has a well-defined preference over the objects of
choice, meaning that if she could fix what BB learns about her---if
what BB learns about her were independent from her choices---then she
would choose her favorite object. Further, Alice's
preferences over privacy likely satisfy particular structural
properties.  For example, she has well-defined preferences over the
objects of choice, and she cares about the preference revealed by
her choices: she always prefers revealing less to revealing
more. In economics, preferences of this form are called separable and
monotonic; and such preferences normally place strong
restrictions on agents' behavior.

However, contrary to the above discussion, the results in this paper
prove that nothing can be inferred about Alice's preferences once we
introduce the possibility that she has concerns about privacy. No
matter what her behavior, it is compatible with some concerns over
privacy, i.e., she always has an ``alibi'' that can explain her
choices as a consequence of privacy concerns. The strongest version of
this result is that {\em all possible  behaviors on the part of Alice
  are compatible with all possible   preferences that Alice may have
  over objects}:  postulate some arbitrary  behavior for Alice, and
some arbitrary preference over objects, and the two will always be
compatible.

So BB's objective is
hopeless. He can never learn anything about Alice's true preferences
over political blogs, or over any other objects of choice. If BB tries
to estimate preferences from some given choices by Alice, he finds
that all preferences could be used to  explain her choices. He cannot
narrow down the set of preferences Alice might have, no matter what
the observed behavior. The result continues to hold if BB
adversarially sets up scenarios for Alice to choose from.  That is, even if BB
offers Alice menus of choices so as to maximize what he can learn from
her behavior, the result is still that nothing can be learned.

The results in this paper have a variety of implications.

First, they motivate the use of specific parametric models of preferences over
privacy. Our main result makes strong qualitative assumptions about
preferences (separability, monotonicity). Given that such assumptions
lack empirical bite, one should arguably turn to stronger
assumptions yet. The paper proposes an additive utility function that
depends on the chosen object and on what is revealed by the consumer's
choices. If Alice chooses $x$ then she obtains a utility $u(x)$ and a
``penalty'' $v(x,y)$ for not choosing $y$, for all non-chosen $y$,
as she reveals to BB that she ranks $x$ over $y$.
This additive model does have restrictions for the consumer's
behavior, and could be estimated given data on Alice's choices. The
model is methodologically close to models used in economics to explain
individual choices, and could be econometrically estimated using
standard techniques. The paper discusses a test for the
additive model based on a linear program.

Second, while the paper's main motivation is consumers' behavior on the
internet, the results have implications for  issues commonly
discussed in behavioral economics. Some behavioral ``anomalies'' could
be the consequence of the presence of an outside observer. For example
(elaborating on a laboratory experiment by \cite{simonson1992choice2}),
consider a consumer who is going to buy a technical gadget, such as
a phone or a camera. The consumer might prefer a simple camera over a
more complex one which they might not know how to operate; but when
presented with a menu that has a simple,
an intermediate and an advanced camera, they might choose the
intermediate one because they do not want to reveal to the world that
they do not know how to use a complex camera. Of course, the results
show that this line of reasoning may not be very useful, as anything can
be explained in this fashion. The results suggest, however, that a
stronger parametric model may be useful to explain various behavioral phenomena.

Third, the results explain why BB may want to be hide the
fact that consumer behavior is being observed. The NSA or Google
seem to dislike openly discussing that they are monitoring consumers' online
behavior. One could explain such a desire to hide by political issues,
or because the observers wish to maintain a certain public image, but
here we point to another reason. The observations simply become
ineffective when the consumer is aware that she is being observed.

\section{Modeling privacy preferences}
\label{s.modeling}

The goal of this paper is to study the testable implications of choice
data in a context where agents have privacy preferences.  To this end,
we adapt the standard conceptualization of consumer choice theory in
economics (see e.g. the textbook treatments in \cite{mas1995} or
\cite{rubinstein2012lecture}) to a
situation where the consumer is aware of, and has preferences over,
the information revealed by her choices.

\subsection{The setting}
We focus on a situation where there is an outside observer (he), such
as Google or the NSA, that is gathering data about the choices of a
consumer (she) by observing her choices.  We assume that the consumer
is presented with a set of alternatives $A$ and then makes a choice
$c(A)$, which the outside observer sees.  The observer then infers
from this choice that $c(A)$ is preferred to all other alternatives in
$A$.

The above parallels the classical revealed preference theory
framework; however our model differs when it comes to the the behavior
of the consumer, which we model as ``privacy-aware''.  We assume that
the consumer is aware of the existence of an outside observer, and so
she may care about what her choices reveal about her. Specifically,
her choices are motivated by two considerations. On the one hand, she
cares about the actual chosen alternative. On the other hand, she
cares about what those choices reveal about her preferences over
alternatives, i.e., her revealed preferences. We capture this by
assuming that the consumer has preferences over pairs $(x,B)$, where
$x$ is the chosen object and $B$ is the information revealed about the
consumer's preferences.

An important point about the setting  is that the
inferences made by the observer do not recognize that the consumer is
privacy aware. This assumption about the  observer being naive is
literally imposed on the behavior of the observer, but {\em it is really an
assumption about how the agent thinks that the observer makes
inferences.} The agent thinks that the observer naively uses revealed
preference theory to make inferences about her preferences. The
observer, however, could be as sophisticated as any reader of this
paper in how they learn about the agent's preferences. The upshot of
our results is that such a sophisticated observer could not learn
anything about the agent's behavior.

It is natural to go one step further and  ask ``What if the agent
knows that the observer knows that the agent is privacy-aware?''
Or, ``what if the agent knows that the observer knows that the agent
knows that the observer knows that the agent is privacy-aware?''
The problem naturally lends itself to a discussion of the role of
higher order beliefs. We formalize exactly this form of a cognitive
hierarchy in Section~\ref{s.higherorder}, and we discuss how our
results generalize.

\subsection{Preliminaries}

Before introducing our model formally, there are a few preliminaries that
are important to discuss.  Let $\B(X) = 2^{X\times X}$
denote the set of all binary preference relations on a set $X$ and
recall that a binary relation $\succeq$ is a weak order if it is
complete (total) and transitive. We say that $x \succ y$ when $x
\succeq y$ and it is not the case that $y \succeq x$.  Finally, a
linear order is a weak order such that if $x\neq y$ then $x\succ y$ or
$y\succ x$.

We shall often interpret binary relations as graphs. For $B\in
\B(X)$, define a graph by  letting the vertex set
of the graph be equal to $X$ and the edge set be $B$. So,
 for each element $(x,y) \in B$, we
have a directed edge in the graph from $x$ to $y$.  We say that a
 binary relation $B$ is acyclic if there does not exist a directed
 path that both originates and ends at $x$, for any $x \in X$.
The following simple result, often called Spilrajn's Lemma, is useful.

\begin{lemma} \label{lem:spil}
If $B\subseteq \B(X)$ is acyclic, then there is a linear
  order $\succeq$ such that $B\subseteq \succeq$.
\end{lemma}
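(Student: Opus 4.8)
The plan is to prove this directly: first replace $B$ by its transitive closure, then extend the resulting strict partial order one incomparable pair at a time until it becomes total, and finally read off the desired linear order.

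First I would pass to the transitive closure $P$ of $B$, that is, the set of pairs $(x,y)$ joined by a directed $B$-path. By construction $P$ is transitive and contains $B$, so it suffices to find a linear order containing $P$. Acyclicity of $B$ is exactly the statement that $P$ is irreflexive, since a pair $(x,x)\in P$ would be a directed path originating and ending at $x$. Thus $P$ is a strict partial order, i.e. irreflexive and transitive (hence asymmetric).

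The key step is an extension lemma: if $Q$ is a strict partial order and $x\neq y$ are $Q$-incomparable (neither $(x,y)\in Q$ nor $(y,x)\in Q$), then the transitive closure $Q'$ of $Q\cup\{(x,y)\}$ is again a strict partial order. Transitivity holds by construction, so the content is irreflexivity. Any pair newly created by the closure arises from a path through the added edge, and so has the form $(a,b)$ with $a=x$ or $(a,x)\in Q$, and $b=y$ or $(y,b)\in Q$. A new loop $(z,z)$ would therefore force, after applying transitivity of $Q$, one of $x=y$ or $(y,x)\in Q$, each contradicting incomparability together with $x\neq y$. Hence $Q'$ has no loop and is a strict partial order containing $Q$ and the pair $(x,y)$. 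I expect this verification, that no new cycle is created, to be the main obstacle, since it is the only place where acyclicity is genuinely used.

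Finally I would iterate the extension. For finite $X$, repeatedly choosing a distinct incomparable pair and applying the extension lemma strictly decreases the number of incomparable pairs, so after finitely many steps one reaches a strict partial order $T\supseteq P$ comparing every pair of distinct elements; for general $X$ the same conclusion follows by applying Zorn's Lemma to the family of strict partial orders containing $P$, ordered by inclusion, whose chains admit their union as an upper bound, the extension lemma guaranteeing that a maximal element must be total. Defining $x\succeq y$ if and only if $(x,y)\in T$ or $x=y$ then yields a complete, transitive, antisymmetric relation, i.e. a linear order, and $B\subseteq P\subseteq T\subseteq{\succeq}$ gives the claim.
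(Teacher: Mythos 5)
The paper offers no proof of this lemma at all: it is invoked as a known result (Szpilrajn's extension theorem), so there is no internal argument to compare against, and your proposal supplies the missing proof. Your argument is correct and is the standard one. Passing to the transitive closure $P$ correctly converts acyclicity of $B$ into irreflexivity of $P$, making $P$ a strict partial order; the one-pair extension lemma is the crux, and your case analysis is right: a loop in the closure of $Q\cup\{(x,y)\}$ forces either $x=y$ or $(y,x)\in Q$, contradicting incomparability. Two remarks. First, your description of the newly created pairs tacitly assumes that a witnessing path can be taken to cross the added edge $(x,y)$ exactly once; this reduction itself uses incomparability (two crossings would produce a $Q$-path from $y$ to $x$), so it is worth spelling out rather than leaving implicit. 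Second, for the paper's purposes the Zorn step is dispensable: the lemma is only ever applied to relations on the finite vertex set $X\times\B(X)$ (or $X\times\B(\Y^{k-1})$), where $X$ is finite, so your counting argument --- each extension strictly decreases the number of incomparable pairs --- already covers every use in the paper, while the Zorn argument buys the general (infinite) statement as a bonus. Your final step, adjoining the diagonal to the strict total order $T$, does yield a linear order in exactly the paper's sense (complete, transitive, and $x\succ y$ or $y\succ x$ whenever $x\neq y$), and $B\subseteq P\subseteq T\subseteq{\succeq}$ closes the argument.
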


\subsection{The model}

Given the setting described above, our goal is to characterize the
testable implications of choice data, and to understand how the
testable implications change when consumers are privacy-aware as
opposed to privacy-oblivious.  To formalize this we denote a
\textit{choice problem} by a tuple $(X,\A,c)$ where $X$ is a finite set of
alternatives,  $\A$ a collection of nonempty
subsets of $X$, and  $c:\A \rightarrow X$ such that $c(A)\in A$ for
all $A\in \A$.

In choice problem $(X,\A,c)$, the consumer makes choices for each
$A\in\A$ according to the function $c$. Further, given $A\in \A$ and
$x=c(A)$, the observer infers that the consumer
prefers $x$ to any other alternative available in $A$. That is, he
infers that the binary comparisons $(x,y)$ $\forall$ $y\in
A\setminus\{x\}$ are part of the consumer's preferences over
$X$. Such inferences lie at the heart of revealed preference theory (see
e.g.\ \cite{varia82} or \cite{varian2006revealed}).

A \textit{privacy preference} is a linear order  $\succeq$ over  $X\times
2^{X\times X}$. A privacy preference ranks objects of the form
$(x,B)$, where $x\in X$ and $B\in \B(X)$.
If a consumer's choices are guided by a privacy
preference, then she cares about two things: she cares about the choice
made (i.e. $x$) and about what her choices reveal about her preference
(hence $B$).

Given the notions of a choice problem and privacy preferences defined
above, we can now formally define the notion of rationalizability that
we consider in this paper.

\begin{definition} \label{def:rational}
A choice problem $(X,\A,c)$ is \textbf{rationalizable (via privacy preferences)} if there is a privacy preference $\succeq$ such that if $x=c(A)$ and $y\in A\setminus\{x\}$ then
 \[
(x,\{(x,z):z\in A\setminus\{x\} \})\succ
(y,\{(y,z):z\in A\setminus\{y\} \}),
\]  for all $A\in\A$.
In this case, we say that $\succeq$ \textbf{rationalizes} $(X,\A,c)$.
\end{definition}

Thus, a choice problem is rationalizable when there exists a privacy
preference that ``explains'' the data, i.e., when there exists a
privacy preference for which the observed choices are maximal.

\section{The rationalizability of privacy-aware choice}
\label{s.mainresults}

In this section, we present our main results, which characterize when
choice data from privacy-aware consumers is rationalizable.  Our
results focus on the testable implications of structural assumptions
about the form of the privacy preferences of the consumer. While a consumer's preferences may, in general, be a complex combination of preferences over the choices and revealed preferences, there are some natural properties that one may expect to hold in many situations.  In
particular, we focus on three increasingly strong structural
assumptions in the following three subsections: monotonicity,
separability, and additivity.

\subsection{Monotone privacy preferences}

A  natural assumption on privacy preferences
is ``monotonicity'', i.e., the idea that revealing less information is
always better.  Monotonicity of privacy preferences is a common
assumption in the privacy literature, e.g., see \cite{Xiao13} and \cite{NOS12}, but of
course one can imagine situations where it may not hold,  e.g., see \cite{CCKMV13} and \cite{NVX14}.

In our context, we formalize monotone privacy preferences as follows.
\begin{definition}
A binary relation $\succeq$ over $X\times 2^{X\times X}$ is a \textbf{monotone privacy preference} when
\begin{enumerate}[(i)]
\item $\succeq$ is a linear order, and
\item $B\subsetneq B'$ implies that $(x,B)\succ (x,B')$.
\end{enumerate}
\end{definition}

This definition formalizes the idea that revealing less information is
better.  In particular, if $B\subsetneq B'$, then fewer comparisons
are being made in $B$ than in $B'$, so $(x,B)$ reveals less
information to the observer than $(x,B')$.

Given the above definition, the question we address is ``what are the
empirical implications of monotone privacy preferences?''  That is,
``Is monotonicity refutable via choice data?'' The following
proposition highlights that monotonicity is \emph{not} refutable,
so any choice data has a monotone privacy preference that explains
it.

\begin{proposition} \label{prop:anyc}
Any choice problem is rationalizable via monotone privacy preferences.
\end{proposition}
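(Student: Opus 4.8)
The goal is to show that for *any* choice problem $(X,\A,c)$, we can construct a monotone privacy preference that rationalizes it. The key idea is that the rationalization condition in Definition~\ref{def:rational} only constrains how $\succeq$ ranks the specific pairs $(x, B_A^x)$ that actually arise from choices, where I write $B_A^x = \{(x,z) : z \in A\setminus\{x\}\}$ for the revealed-preference set associated with choosing $x$ from $A$. Since there are finitely many sets $A \in \A$, these constraints involve only finitely many pairs, so I have enormous freedom in ranking everything else. The strategy is to build an acyclic relation encoding exactly the constraints we need (choices beat alternatives, and smaller $B$ beats larger $B$), and then invoke Spilrajn's Lemma (Lemma~\ref{lem:spil}) to extend it to a linear order.

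**Constructing the relation.** First I would define a binary relation $R$ on $X \times 2^{X\times X}$ as the union of two families of edges. The first family, $R_{\mathrm{mon}}$, encodes monotonicity: for every $x \in X$ and every pair $B \subsetneq B'$, include the edge $\bigl((x,B),(x,B')\bigr)$. The second family, $R_{\mathrm{ch}}$, encodes the choice data: for every $A \in \A$ with $x = c(A)$ and every $y \in A\setminus\{x\}$, include the edge $\bigl((x,B_A^x),(y,B_A^y)\bigr)$. The relation I want to extend is $R = R_{\mathrm{mon}} \cup R_{\mathrm{ch}}$. If I can show $R$ is acyclic, then by Lemma~\ref{lem:spil} there is a linear order $\succeq$ containing $R$; this $\succeq$ is automatically monotone (it contains all the $R_{\mathrm{mon}}$ edges, giving condition (ii)) and it rationalizes the data (it contains all the $R_{\mathrm{ch}}$ edges, giving the condition of Definition~\ref{def:rational}).

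**The main obstacle.** The crux is verifying acyclicity of $R$, and this is where the two families of edges could in principle interact badly. The monotonicity edges $R_{\mathrm{mon}}$ are acyclic on their own because strict inclusion $\subsetneq$ is a strict partial order (hence they only move from smaller to strictly larger $B$ within a fixed first coordinate $x$). The choice edges $R_{\mathrm{ch}}$ always go from the chosen pair to a non-chosen pair. The danger is a mixed cycle alternating between the two edge types. I expect the clean way to rule this out is to exhibit a single real-valued potential function $\Phi(x,B)$ that strictly decreases along every edge of $R$, since a strictly decreasing potential immediately forbids cycles. A natural candidate is $\Phi(x,B) = -\,|B|$ possibly refined by a tie-breaking term: monotonicity edges strictly decrease $\Phi$ because $|B| < |B'|$ when $B \subsetneq B'$; the subtle point is the choice edges, which compare $B_A^x$ and $B_A^y$ — but here both sets have exactly $|A|-1$ elements, so $|B_A^x| = |B_A^y|$ and the crude cardinality potential is only weakly decreasing on choice edges.

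**Resolving the tie.** To break this tie I would layer a lexicographic refinement on top of cardinality, ordering first by $|B|$ and then by some fixed priority that makes every chosen pair strictly dominate its competitors; concretely, one can assign distinct weights to first-coordinate elements so that the chosen alternative's "score" exceeds that of the alternatives within each menu while never creating a conflict with the cardinality layer (which already dominates across different menu sizes). Since each $A$ is considered in isolation at its own cardinality level $|A|-1$, and choices are single-valued so no two distinct chosen pairs compete at the same menu, the refinement can be made consistent across all of $\A$ simultaneously. Verifying that this combined potential is strictly decreasing on every edge — handling the within-menu choice edges and the cross-cardinality monotonicity edges — is the technical heart of the argument, but once the potential exists, acyclicity and hence the whole proposition follow immediately from Spilrajn's Lemma.
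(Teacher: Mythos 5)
Your scaffolding is exactly the paper's: build $R = R_{\mathrm{mon}} \cup R_{\mathrm{ch}}$, prove acyclicity, and extend to a linear order via Lemma~\ref{lem:spil}; the cardinality argument disposing of monotone edges is also correct. The genuine gap is in your step ``Resolving the tie.'' There you propose to break ties at each cardinality level by assigning weights to \emph{first-coordinate elements}, i.e.\ a function $w:X\to\mathbb{R}$ with $w(c(A)) > w(y)$ for every $A\in\A$ and every $y\in A\setminus\{c(A)\}$. Such a $w$ is precisely a classical, privacy-oblivious utility rationalization of $c$, and it need not exist --- indeed, if it always existed, the proposition would be trivial and privacy preferences would add nothing. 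Concretely, take $X=\{a,b,c\}$, $\A=\{\{a,b\},\{b,c\},\{a,c\}\}$ with $c(\{a,b\})=a$, $c(\{b,c\})=b$, $c(\{a,c\})=c$. All three menus sit at the same cardinality level $|B|=1$, so your claim that ``each $A$ is considered in isolation at its own cardinality level'' fails, and your weights would need to satisfy $w(a)>w(b)>w(c)>w(a)$. Note that the relation $R$ for this data \emph{is} acyclic (the proposition is true here), but your potential cannot certify it: the target $(b,B^b_{\{a,b\}})$ of the first choice edge is a different vertex from the source $(b,B^b_{\{b,c\}})$ of the second, and no edge connects them at that cardinality level.

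That last observation is also the repair: the tie-breaker must depend on the second coordinate $B$, not just on $x$. Since $B_A^y=\{(y,z):z\in A\setminus\{y\}\}$ determines $A$ whenever $|A|\ge 2$, a pair $(y,B_A^y)$ with $y\neq c(A)$ can never be the source of a choice edge; that would require $y=c(A')$ with $B_{A'}^y=B_A^y$, forcing $A'=A$ and hence $y=c(A)$, a contradiction. So you may take as second lexicographic layer $\psi(x,B)=1$ if $(x,B)$ is a chosen pair (i.e.\ $x=c(A)$ and $B=B_A^x$ for some $A\in\A$) and $\psi(x,B)=0$ otherwise: $\psi$ strictly decreases along every choice edge, while the first layer $-|B|$ strictly decreases along every monotone edge, and acyclicity follows. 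This injectivity-of-$B\mapsto A$ argument, equivalently the impossibility of two consecutive rationalizing edges, is exactly the content of the paper's proof; it is the one idea your write-up is missing.
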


\begin{proof}
We shall use the following notation:
\[A_x =  \{(x,y) : y\in A\setminus \{x\}\}.\]
Define a binary relation $E$ on $X\times \B(X)$ as follows: $(x,B) \mb E (x',B')$ if either $x=x'$ and $B\subsetneq B'$, or $x\neq x'$ and there is $A\in \A$ with $x=c(A)$, $x'\in A$ and $B=A_X$ while $B'=A_{x'}$.  It will be useful for our proof to think of $E$ as the edges of a directed graph $G=(V,E)$, where $V = X\times \B(X)$.  The edges where $x = x'$ result from the monotonicity requirement, and the edges where $x \neq x'$ result from the requirement that observed choices be rationalized.  For shorthand, we will call these edges ``monotone'' and ``rationalizing,'' respectively.  It should be clear that any linear order that extends $B$ (i.e any linear order $\succeq$ with $E\subseteq \succeq$) is a monotone privacy preference that rationalizes $(X,\A,c)$. By Lemma~\ref{lem:spil}, we are done if we show that $E$ is acyclic.  To prove that $E$ is acyclic, it is equivalent to show that the graph is acyclic.

By the definition of $E$, for any pair $(x,B) \mb E(x',B')$, the cardinality of $B$ must be at most that of $B'$, and if $x=x'$ then the cardinality must be strictly smaller due to monotonicity.  Hence, there can be no cycles containing monotone edges.

Thus any cycle must contain only rationalizing edges $(x,B) \mb E (x',B')$ with $x\neq x'$. Each such edge arises from some $A\in \A$ for which $B=A_X$ while $B'=A_{x'}$, and for each such $A$ there is a unique $x\in A$ with $x=c(A)$.  If the graph were to contain two consecutive rationalizing edges, it would contradict uniqueness of choice.  Therefore there cannot be any cycles in $E$.
\end{proof}

Proposition \ref{prop:anyc} provides a contrast to the context of
classical revealed preference theory, when consumers are
privacy-oblivious. In particular, in the classical setting, choice
behavior that violates the strong axiom of revealed preferences
(SARP) is not rationalizable, and thus refutes the consumer
choice model.  However, when privacy-aware consumers are considered,
such a refutation of monotonic preferences is impossible.
Interestingly, this means that while one may believe that preferences
are non-monotonic, the form of data considered in this paper does not
have the power to refute monotonicity.\footnote{This phenomenon is common in
the consumer choice formulation of the revealed preference problem,
but it comes about for completely different reasons.}

Note that the question addressed by Proposition \ref{prop:anyc} is
only whether the consumer's choice behavior is consistent with
rational behavior, and is not about whether the consumer's underlying
preferences over outcomes in $X$ can be learned.  In particular, these
underlying preferences may not even be well defined for the general
model considered to this point.  We address this issue in the next
section after imposing more structure on the privacy preferences.

\subsection{Separable privacy preferences}\label{s.separable}
That all choice behavior is rationalizable via monotone
privacy preferences can be attributed to the flexibility provided by
such preferences. Here we turn to a significant restriction on the
preferences one might use in rationalizing the consumer's behavior.

It is natural to postulate
that the consumer would have some underlying, or intrinsic,
preferences over possible options when her choices are not observed.
Indeed, the observer is presumably trying to learn the agent's
preferences {\em over objects}. Such preferences should be well defined:
if outcome $x$
is preferred to outcome $y$ when both are paired with the same privacy
set $B$, then it is natural that $x$ will always be preferred to $y$
when both are paired with the same privacy set $B^{\prime}$, for all
possible $B^{\prime}$.  This property induces underlying preferences
over items in $X$, as well as the agent's privacy-aware
preferences.

We formalize the notion of separable privacy preferences as follows.
\begin{definition}
A binary relation $\succeq$ over $X\times 2^{X\times X}$ is a
\textbf{separable privacy preference} if it is a  monotone privacy
preference
and additionally satisfies that for all $x,y\in X$ and $B\in \B(X)$,
\[
(x, B) \succeq (y,B) \Longrightarrow (x, B^{\prime}) \succeq (y,B^{\prime}) \; \forall B^{\prime}\in\B(X)\]
\end{definition}
That is, whenever $(x,B)$ is preferred to $(y,B)$ for some preference
set $B$, then also $(x, B^{\prime})$ is preferred to $(y,B^{\prime})$
for all other  sets $B^{\prime}$.

Separable privacy preferences have an associated preference relation
over $X$. If $\succeq$ is a separable privacy preference, then define
$\succeq|_X$ as $x \succeq|_X y$ if and only if $(x,B)
\succsim (y,B)$ for all $B\in \B(X)$. Note that  $\succeq|_X$ is a
linear order over $X$. We can interpret $\succeq|_X$ as the projection
of $\succeq$ onto $X$.

There are two questions we seek to answer:  ``What are the empirical
implications of separability?''  and ``When can an observer learn the
underlying choice preferences of the consumer?''  The following
proposition addresses both of these questions.  Note that Proposition
~\ref{prop:nonid} follows from a more general result,
Theorem~\ref{thm:main}, which is presented in Section
\ref{s.higherorder}.

\begin{proposition} \label{prop:nonid}
Let $(X,\A,c)$ be a choice problem, and let $\succeq$ be any linear
order over $X$. Then there is a separable privacy preference
$\succeq^*$ that rationalizes $(X,\A,c)$ such that the projection
of $\succeq^*$ onto $X$ is well defined and coincides with $\succeq$,
i.e., $\succeq^*|_X = \succeq$.
\end{proposition}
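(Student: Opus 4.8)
The plan is to build a binary relation on $X \times \B(X)$ that encodes three kinds of constraints simultaneously—the monotonicity edges, the rationalizing edges, and a new family of edges forcing the projection onto $X$ to coincide with the prescribed linear order $\succeq$—and then to apply Spilrajn's Lemma (Lemma~\ref{lem:spil}) exactly as in the proof of Proposition~\ref{prop:anyc}. Concretely, I would define $F$ on $X \times \B(X)$ to contain: (i) the monotone edges $(x,B)\mb F (x,B')$ whenever $B\subsetneq B'$; (ii) the rationalizing edges $(x,A_x)\mb F (y,A_y)$ whenever $x=c(A)$ and $y\in A\setminus\{x\}$; and crucially (iii) ``separability/projection'' edges $(x,B)\mb F (y,B)$ for every $B\in\B(X)$ whenever $x\succ y$ in the given order on $X$. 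The third family is what pins down $\succeq^*|_X = \succeq$: if a linear order extends all of these same-$B$ comparisons, its projection onto $X$ will agree with $\succeq$, and it will automatically satisfy the separability condition because $x\succ y$ forces $(x,B)\succ (y,B)$ for \emph{all} $B$ uniformly.

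First I would verify that any linear order extending $F$ is a \emph{separable} monotone privacy preference whose projection is $\succeq$. Monotonicity is immediate from family (i); separability follows because the only way $(x,B)\succ(y,B)$ can hold for one $B$ while failing for another is ruled out once we have forced the ordering of $x$ and $y$ to be the same (namely, given by $\succeq$) across all $B$; and $\succeq^*|_X = \succeq$ holds by construction of family (iii). The reduction to acyclicity of $F$ is then exactly the invocation of Spilrajn's Lemma.

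The heart of the argument—and the main obstacle—is showing that $F$ is acyclic. The earlier proof only had to rule out cycles among monotone and rationalizing edges; here the new ``same-$B$'' edges interact with both. My plan is to track two quantities along any directed path: the cardinality $|B|$ of the second coordinate, and the position of the first coordinate in the order $\succeq$. Monotone edges strictly increase $|B|$ and leave the first coordinate fixed; projection edges (family (iii)) keep $|B|$ fixed but strictly increase the $\succeq$-rank of the first coordinate (moving from the worse object $y$ up to the better object $x$—or whichever direction is consistent with the edge orientation, which I would fix carefully so that it points from the $\succeq$-greater element). Rationalizing edges are the delicate case, since they change \emph{both} coordinates and need not respect $\succeq$. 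The key observation, as in Proposition~\ref{prop:anyc}, is that a rationalizing edge has a source whose $B$-component is of the special form $A_x=A_X$ (a full ``choice'' set) and a target of the form $A_y$; uniqueness of choice means no vertex is simultaneously the tail of one rationalizing edge and the head of another compatible one, so rationalizing edges cannot chain, and any cycle would have to close up using monotone or projection edges that cannot undo the net effect.

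I expect the genuinely tricky part to be confirming that the projection edges and the rationalizing edges cannot conspire to form a cycle, since the rationalizing edges can move the first coordinate in the ``wrong'' $\succeq$-direction. The clean way to handle this is to argue that along any cycle the $B$-cardinality can never net-increase (so monotone edges, which strictly increase it, are forbidden in a cycle), reducing to cycles using only projection and rationalizing edges; then one shows such a cycle forces two rationalizing edges to be adjacent at a vertex, contradicting uniqueness of $c$—essentially reusing the counting argument from the previous proof while verifying the projection edges cannot supply the missing link. This is the step I would write out most carefully, as all the separability content is concentrated there.
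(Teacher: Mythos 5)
Your proposal is correct and takes essentially the same approach as the paper, which proves Proposition~\ref{prop:nonid} as the $k=1$ case of Theorem~\ref{thm:main}: that proof builds exactly your three edge families (monotone, separable/projection, rationalizing) on $X\times\B(X)$, invokes Lemma~\ref{lem:spil}, and rules out cycles by the same cardinality and uniqueness-of-choice arguments. The one step you deferred---that projection edges cannot link the head of one rationalizing edge to the tail of another---is the paper's ``rationalizing--separable--rationalizing'' step, and it closes exactly as you suggest: projection edges preserve the $B$-component, and injectivity of $(A,x)\mapsto A_x$ then forces the linked head and tail to coincide in both menu and element, contradicting that a rationalizing head is never the chosen element of its menu.
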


Think of $\succeq$  as a conjecture that the observer has about the
agent. Proposition~\ref{prop:nonid} implies that {\em no matter the
  nature of such a  conjecture, and no matter what choice behavior is
  observed, the two are compatible.}

This proposition carries considerably more weight than Proposition
\ref{prop:anyc}.  Separability imposes much more structure than
monotonicity alone and, further, Proposition \ref{prop:nonid} says
much more than simply that separability has no testable implications,
or that it is not refutable via choice data. Proposition
\ref{prop:nonid} highlights that the task of the observer is hopeless
in this case -- regardless of the choice data, there are preferences
over revealed information that allow all possible choice
observations to be explained.

That is, the choice data does not
allow the observer to narrow his hypothesis about the consumer
preferences at all.  This is because the consumer always has an alibi
available (in the form of preferences over revealed information) which
can allow her to make the observed data consistent with any preference
ordering over choices.

In some sense, our result is consistent with the idea that secrecy is
crucial for observers such as the NSA and Google.  If the consumer is
not aware of the fact that she is being observed then the observer can
learn a considerable amount from choice data, while if the consumer is
aware that she is being observed then the choice data has little power
(unless more structure is assumed than separability).

\subsection{Additive privacy preferences}\label{s.additive}

So far, we have seen that monotonicity and separability do not provide
enough structure to allow choice data to have testable implications or
to allow the observer to learn \emph{anything} about consumer
preferences over choices.  This implies that further structure must be
imposed for choice data to have empirical power.  To that end, we now
give an example of a model for privacy preferences where choice data
does have testable implications.  The model we consider builds on the
notion of separable privacy preferences and additionally imposes
additivity.

\begin{definition} \label{def:addpref}
A binary relation $\succeq$ over $X\times 2^{X\times X}$ is an \textbf{additive privacy preference} if there are functions $u:X\rightarrow
\mathbb{R}^+$ and $v:X\times X\rightarrow \mathbb{R}^+$ such that $(x,B)\succ (x',B')$ iff
\[
 u(x) - \sum_{(z,z')\in B} v(z,z') >
u(x') - \sum_{(z,z')\in B'} v(z,z').\]
\end{definition}

While monotonicity and separability are general structural properties of privacy preferences, the definition of additivity is much more concrete.  It specifies a particular functional form, albeit a simple and natural one.  In this definition, the consumer experiences utility $u(x)$ from the choice made and disutility $v(x,y)$ from the privacy loss of revealing that $x \succ y$ for every pair $(x,y) \in X \times X$.  Note that this form is an additive extension of the classical consumer choice model, which would include only $u$ and not $v$.

Moreover, this definition also satisfies both monotonicity and
separability, making it a strictly stronger restriction.  Monotonicity
is satisfied because the agent always experiences a \emph{loss} from
each preference inferred by the observer.  Namely, the range of $v$ is
restricted to non-negative reals, so for a fixed choice element, the
agent will always prefer fewer inferences to be made about her
preferences.\footnote{Monotonicity restricts to the case where people
  want to keep their preferences private. It may be interesting to
  explore in future work, the case where people are happy to reveal
  their information, e.g., conspicuous consumption.  Under additive preferences, this would correspond to allowing the range of $v$ to be all of $\mathbb{R}$.}
Separability is satisfied because utilities $u$ determine the linear
ordering over $X$, so for a fixed set of inferences made by the
observer, privacy preferences will correspond to the preferences
determined by $u$.

Of course there are a number of variations of this form that could also make sense, e.g., if the disutility from a revealed preference $(x,y)$ was only counted once instead of (possibly) multiple times due to multiple revelations in the choice data. This would correspond to a consumer maximizing a ``global'' privacy loss rather than optimizing online for each menu. However, this modeling choice requires the agent to know ex ante the set $\A$ of menus from which she will choose, and additional assumptions about the order in which the she faces these menus.  For our analysis we restrict to additive preferences as defined above.

Rationalizability of additive privacy preferences corresponds to the existence of functions $u$ and $v$, such that the observed choice behavior maximizes the consumer's utility under these functions. Here, it turns out the imposed structure on privacy preferences is enough to allow the model to have testable implications, as shown in the following proposition. 

\begin{proposition} \label{prop:addtest}
There exists a choice problem $(X,\A,c)$ that is not rationalizable with additive privacy preferences.
\end{proposition}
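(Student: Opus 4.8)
The plan is to construct an explicit finite counterexample: a choice problem $(X,\A,c)$ for which no pair of functions $u,v$ can simultaneously satisfy all the strict inequalities forced by Definition~\ref{def:rational} under the additive form. The key observation is that additivity converts the rationalizability requirement into a \emph{system of linear inequalities} in the unknowns $u(x)$ and $v(z,z')$. Each menu $A\in\A$ with chosen element $x=c(A)$ and each non-chosen $y\in A$ yields the constraint
\[
u(x) - \sum_{z\in A\setminus\{x\}} v(x,z) \;>\; u(y) - \sum_{z\in A\setminus\{y\}} v(y,z).
\]
So to prove the proposition I need to exhibit a finite collection of such constraints that is infeasible, i.e.\ whose inequalities cannot all hold for any nonnegative $u,v$. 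This is exactly the kind of obstruction detectable by a Farkas-type argument: a contradiction arises if some nonnegative combination of the left-minus-right differences sums to a strictly positive quantity while the corresponding variable coefficients cancel.

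\textbf{The construction I would try first} uses a small alternative set, say $X=\{a,b,c\}$, and a family of two-element and three-element menus chosen so that the induced inequalities form a cycle that additivity cannot absorb. Concretely, I would start from binary menus $\{a,b\},\{b,c\},\{a,c\}$ with a cyclic choice pattern ($a$ from $\{a,b\}$, $b$ from $\{b,c\}$, $c$ from $\{a,c\}$). On binary menus each constraint reads $u(x)-v(x,y) > u(y)-v(y,x)$, and these three summed together give $0 > 0$ \emph{only if} the $v$-terms cancel—but here the two directions $v(x,y)$ and $v(y,x)$ are distinct variables, so the naive cycle does not immediately close. To force cancellation I would add larger menus (e.g.\ the full set $\{a,b,c\}$ together with selected binary menus) so that the \emph{same} pair $(z,z')$ appears with opposite net sign across the chosen constraints, making the $v$-coefficients telescope while the $u$-coefficients also cancel, leaving a strict inequality among constants that is self-contradictory.

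\textbf{The main obstacle} is precisely engineering this cancellation: because $v(x,y)$ and $v(y,x)$ are independent and because an element $y$ contributes the penalty term $\sum_{z}v(y,z)$ only when it is the \emph{hypothetically chosen} alternative in a comparison, the bookkeeping of which $v$-variables appear in which constraints is delicate. I would organize the search by writing the full coefficient matrix of the inequality system and looking for a nonnegative left-null-vector (a Farkas certificate) whose existence certifies infeasibility; equivalently, I would find a nonnegative combination of constraints in which every $u(x)$ and every $v(z,z')$ cancels but the constant side yields $0>0$. Once such a combination is identified on paper, the proof is immediate: summing the corresponding strict inequalities with the certificate's weights produces $0>0$, contradicting the assumed existence of $u,v$, so no additive privacy preference rationalizes the constructed $(X,\A,c)$. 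The only real work is in displaying a small enough instance and verifying the cancellation, which I would confirm by direct substitution rather than by invoking the LP duality abstractly.
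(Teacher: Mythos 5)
Your framing is exactly right: additive rationalizability is a finite system of strict linear inequalities in the unknowns $u(\cdot)$ and $v(\cdot,\cdot)$, and infeasibility is certified by a nonnegative combination of constraints in which every variable cancels, leaving $0>0$. You also correctly diagnose why the naive binary $3$-cycle fails ($v(x,y)$ and $v(y,x)$ are distinct variables). This is precisely the mechanism behind the paper's proof. But the statement is an existence claim, so the construction \emph{is} the proof, and your proposal stops short of producing one: it ends by describing how you would search. Worse, the instance you say you would try first provably cannot work. With $|X|=3$, for each element $x$ the map sending $\bigl(u(x),v(x,y),v(x,z)\bigr)$ to the three menu utilities $\bigl(u(x)-v(x,y),\;u(x)-v(x,z),\;u(x)-v(x,y)-v(x,z)\bigr)$ is a linear bijection, so the quantities $U_A(x)=u(x)-\sum_{t\in A\setminus\{x\}}v(x,t)$ can be prescribed arbitrarily and independently across all menus; since each constraint compares $U_A(\cdot)$ only within a single menu, \emph{every} choice function on a three-element ground set is additively rationalizable (nonnegativity of $u,v$ is recovered by shifting all $v$'s, then all $u$'s, by large constants, which shifts all elements of a given menu equally). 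Hence no Farkas certificate exists on three elements, and your proposed search space is empty.

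The missing idea is the shape of a working instance, which requires four elements: three to form a cycle and one fixed element $y$ to be appended to menus. Pair each binary menu $A$ with $A\cup\{y\}$ and let the choice switch between the two elements of $A$: if $z=c(\{x,z\})$ but $x=c(\{x,y,z\})$, the two resulting inequalities differ only in the terms $v(x,y)$ and $v(z,y)$, and summing them isolates $v(z,y)>v(x,y)$. This is exactly the telescoping you were looking for---the pair $(A,A\cup\{y\})$ cancels the $u$'s and all $v$'s except those pointing at $y$. Chaining three such pairs, namely $z=c(\{x,z\})$ with $x=c(\{x,y,z\})$, then $w=c(\{w,z\})$ with $z=c(\{w,y,z\})$, then $x=c(\{x,w\})$ with $w=c(\{x,y,w\})$, yields $v(z,y)>v(x,y)$, $v(w,y)>v(z,y)$, and $v(x,y)>v(w,y)$, a cycle; equivalently, summing all six inequalities with weight one is your Farkas certificate $0>0$. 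This four-element, paired-menu construction is the paper's proof, and it is the concrete content your proposal leaves unresolved.
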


Proposition \ref{prop:addtest} highlights that, while monotonicity and separability cannot be refuted with choice data, additivity can be refuted.  To show this, we construct a simple example of choice data that cannot be explained with any functions $u$ and $v$.

\begin{proof}[Proof of Proposition \ref{prop:addtest}]
To construct an example that is not rationalizable via additive privacy preferences, we begin by defining the set of alternatives as $X=\{x,y,z,w \}$ and the choice data as follows.  It includes six observations:
$z=c(\{x,z \})$,
$x=c(\{x,y,z\})$,
$w=c(\{w,z \})$,
$z=c(\{w,y,z\})$,
$x=c(\{x,w \})$,
$w=c(\{x,y,w\})$.

To see that this choice data is not rationalizable suppose, towards a contradiction, that the pair $(u,v)$ rationalizes $c$. Then  $z=c(\{x,z \})$ implies that \[
u(z) - v(z,x) > u(x) - v(x,z),
\] while $x=c(\{x,y,z\})$
implies that  \[
u(z) - v(z,x) - v(z,y) < u(x) - v(x,z) - v(x,y).
\] Therefore $v(z,y) > v(x,y)$.

Similarly, we can argue that $w=c(\{w,z \})$ and
$z=c(\{w,y,z\})$ together imply that $v(w,y) > v(z,y)$, and $x=c(\{x,w \})$
and $w=c(\{x,y,w\})$ together imply that $v(x,y) > v(w,y)$. This gives us a contradiction and so proves that the choice data is not rationalizable.
\end{proof}

Given that the structure imposed by additive privacy preferences is
testable, the next task is to characterize data sets that are
consistent with (or refute) the additive privacy preference model.
The example given in the proof of Proposition \ref{prop:addtest}
already suggests an important feature of choice data that must hold
for it to be rationalizable.

Given a choice problem $(X,\A,c)$  and an element $y\in X$, define the
 binary relation $R^y$ by $x\mathbin{R^y} z$ if there is $A\in
\A$ with $z=c(A)$ and $x= c(A\cup\{y\})$. Our next result gives a test
for additively rational preferences. It says that, if there are cycles
in the binary relation $R^y$, then the choice data cannot be
rationalized by additive privacy preferences.

\begin{proposition} \label{prop:neccond}
A choice problem can be rationalized by additive privacy preferences only if $R^y$ is acyclic, for all
$y$.
\end{proposition}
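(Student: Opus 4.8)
The plan is to assume that $(X,\A,c)$ is rationalized by additive privacy preferences via functions $u,v$, to fix an arbitrary $y$, and to show that a cycle in $R^y$ would force an impossible chain of strict inequalities among the numbers $\{v(\cdot,y)\}$. The engine of the argument is a single inequality that generalizes the computation in the proof of Proposition~\ref{prop:addtest}: \emph{for every edge $x \mathbin{R^y} z$ with $x\neq z$ and $x\neq y$, one has $v(x,y)<v(z,y)$.}

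First I would establish this key inequality. Fix a witnessing menu $A\in\A$ with $z=c(A)$ and $x=c(A\cup\{y\})$. Observe that $y\notin A$, for otherwise $A\cup\{y\}=A$ and $x=c(A\cup\{y\})=c(A)=z$, contradicting $x\neq z$. Next, since $x=c(A\cup\{y\})\in A\cup\{y\}$ and $x\neq y$, we get $x\in A$. Now write $P=u(x)-\sum_{w\in A\setminus\{x\}}v(x,w)$ and $Q=u(z)-\sum_{w\in A\setminus\{z\}}v(z,w)$ for the additive utilities of selecting $x$ and $z$ out of $A$. Because $z=c(A)$ and $x\in A\setminus\{z\}$, rationalizability gives $Q>P$. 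Because $x=c(A\cup\{y\})$ and $z\in(A\cup\{y\})\setminus\{x\}$, and because $y\notin A$ with $y\neq x,z$, adding $y$ contributes exactly one penalty term to each side, so the analogous inequality reads $P-v(x,y)>Q-v(z,y)$. Adding the two inequalities cancels $P$ and $Q$ and leaves $v(x,y)<v(z,y)$, as claimed.

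With the key inequality in hand, I would rule out cycles. I would first note that $y$ has no incoming edge in $R^y$: if $w\mathbin{R^y}y$ held with $w\neq y$, a witness $A'$ would satisfy $y=c(A')$, forcing $y\in A'$, hence $A'\cup\{y\}=A'$ and $w=c(A')=y$, a contradiction. A vertex with no incoming edge cannot lie on a directed cycle, so any cycle $x_1\mathbin{R^y}x_2\mathbin{R^y}\cdots\mathbin{R^y}x_k\mathbin{R^y}x_1$ through distinct vertices avoids $y$; in particular every source $x_i\neq y$ and every edge has distinct endpoints $x_i\neq x_{i+1}$. The key inequality then applies to each edge, yielding $v(x_1,y)<v(x_2,y)<\cdots<v(x_k,y)<v(x_1,y)$, an impossibility. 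Hence $R^y$ has no cycle, and since $y$ was arbitrary the proof is complete.

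The step I expect to be the main obstacle is the careful bookkeeping in the key inequality, namely verifying that the witnessing menu satisfies $y\notin A$ and $x\in A$, so that introducing $y$ alters each candidate's additive utility by exactly the single term $v(x,y)$ or $v(z,y)$ and the two revealed-preference inequalities combine cleanly. The degenerate possibilities must be isolated and shown harmless: the case $y\in A$ collapses any edge to a trivial self-loop $z\mathbin{R^y}z$, and the case $x=y$ (where the freshly added alternative is itself chosen) can occur as a source but never as a target, so neither appears on a genuine cycle. Once these cases are dispatched, the per-edge inequality and the telescoping contradiction are routine.
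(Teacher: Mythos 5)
Your proposal is correct and takes essentially the same route as the paper: each edge $x \mathbin{R^y} z$ yields the per-edge inequality $v(x,y) < v(z,y)$ by combining the two revealed-preference inequalities from the menus $A$ and $A \cup \{y\}$, and telescoping this around a hypothetical cycle gives the contradiction. If anything, your bookkeeping is more careful than the paper's own proof, which implicitly asserts $x \in A$ (false when $x = y$) and applies the per-edge inequality to arbitrary cycles without excluding self-loops or edges into $y$ --- degenerate cases that you isolate and dispatch explicitly.
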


\begin{proof}
Let $c$ be rationalizable by the additive privacy preferences characterized by
$(u,v)$.  For each $x,z \in X$ such that $x\mathbin{R^y} z$, then
there is some $A\in \A$ such that $z=c(A)$ and $x\in A$,  so
\[ u(z) - \sum_{t\in A} v(z,t) > u(x) - \sum_{t\in A} v(x,t).\]
Similarly, $x= c(A\cup\{y\})$ and $z \in A \cup \{y\}$, so
\[ u(z) - \sum_{t\in A} v(z,t) - v(z,y) > u(x) - \sum_{t\in A} v(x,t) - v(x,y).\]
For both inequalities to be true simultaneously, we need $v(z,y) > v(x,y)$.  Thus,
\begin{equation}\label{eq:uvcycle} x\mathbin{R^y} z \Longrightarrow
  v(z,y) > v(x,y). \end{equation}

Now assume there exists a cycle in binary relation $R^y$: $a_1
\mathbin{R^y} a_2 \mathbin{R^y} \cdots \mathbin{R^y} a_k \mathbin{R^y}
a_1$.  Then by Equation~\eqref{eq:uvcycle}, it must be that $v(a_1, y)
> v(a_2,y) > \cdots > v(a_k, y) > v(a_1,y)$.  In particular, $v(a_1,
y) > v(a_1,y)$ which is a contradiction.  Then for choices to be
rationalized, acyclicity of $R^y$ for all $y \in X$ is a necessary
condition.
\end{proof}

Of course, one would like to develop a test for rationalizability that
is both necessary and sufficient.  We do this next.  Unfortunately,
the test we develop takes super-exponential time to even write down.
This suggests that acyclicity of $R^y$, despite being only a necessary
condition, is likely a more practical condition to use when testing
for rationalizability.

To describe the test for rationalizability, first observe that when an
object $x$ is chosen from a set, the observer infers that $x$ (with
its associated privacy) is preferred to $y$ (and its associated
privacy), for all $y \in A \setminus \{x\}$.  Since we have assumed
these preferences to have a specific functional form as in
Definition~\ref{def:addpref}, the observer can also infer the
corresponding inequality in terms of functions $u$ and $v$.  We
initialize a large matrix to record the inequalities that are inferred
from choice behavior, and ask if there exist values of $u(x)$ and
$v(x,x')$ for all $x, x' \in X$ for which all inferred inequalities
hold.  If so, these values of $u$ and $v$ form additive privacy preferences that
rationalize choices.  If not, then no such preferences exist and the
observed choice behavior is not rationalizable.

\begin{proposition} \label{prop:necsuff}
A choice problem $(X, \A, c)$ is rationalizable if and only if there exists functions $u:X \rightarrow \mathbb{R}^+$ and $v:X \times X \rightarrow \mathbb{R}^+$ satisfying the matrix inequality given by equation~\eqref{eq:matrix}.
\end{proposition}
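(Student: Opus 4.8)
The plan is to exploit the fact that, unlike monotonicity and separability, additivity turns rationalizability into a purely linear feasibility question, so the proposition is essentially the faithful unfolding of Definitions~\ref{def:rational} and~\ref{def:addpref} into the rows of~\eqref{eq:matrix}. First I would fix an enumeration of the unknowns, collecting the values $u(x)$ for $x\in X$ together with the values $v(x,x')$ for $(x,x')\in X\times X$ into a single real vector $\theta$ of length $|X|+|X|^2$. Every constraint that additive rationalizability can demand is linear in $\theta$: for a menu $A\in\A$ with $x=c(A)$ and a losing alternative $y\in A\setminus\{x\}$, the rationalizing requirement
\[
(x,\{(x,z):z\in A\setminus\{x\}\})\succ(y,\{(y,z):z\in A\setminus\{y\}\})
\]
becomes, via Definition~\ref{def:addpref},
\[
u(x)-\sum_{z\in A\setminus\{x\}}v(x,z) \;>\; u(y)-\sum_{z\in A\setminus\{y\}}v(y,z),
\]
which is a single row of coefficients in $\{-1,0,1\}$ acting on $\theta$. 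The matrix in~\eqref{eq:matrix} is precisely the array whose rows are these coefficient vectors, one for each such pair $(A,y)$, together with the rows encoding the sign constraints $\theta\ge 0$ that enforce the ranges $u,v\in\mathbb{R}^+$.

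With this setup the two directions are immediate. For the forward direction, suppose $(X,\A,c)$ is rationalizable via additive privacy preferences, witnessed by $(u,v)$. By Definition~\ref{def:rational} the strict comparison above holds for every menu $A$ and every $y\in A\setminus\{c(A)\}$, and by Definition~\ref{def:addpref} each such comparison is exactly the corresponding displayed inequality. Hence the vector $\theta$ assembled from $(u,v)$ satisfies every row of~\eqref{eq:matrix}, and the sign rows hold because the ranges of $u,v$ are non-negative. For the converse, suppose some $\theta\ge 0$ satisfies~\eqref{eq:matrix}; read off $u$ and $v$ from its coordinates and use them to define $\succeq^*$ by the additive rule of Definition~\ref{def:addpref}. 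This $\succeq^*$ is automatically an additive privacy preference (and therefore monotone and separable, as noted after that definition), and the satisfied rows of~\eqref{eq:matrix} are, by construction, exactly the strict inequalities that Definition~\ref{def:rational} requires of a rationalizing preference. Thus $\succeq^*$ rationalizes $(X,\A,c)$.

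I do not expect a deep obstacle here; the substance of the proposition lies in the correct construction of~\eqref{eq:matrix} rather than in any nontrivial argument. The one point I would treat with care is that all the rationalizing constraints are \emph{strict}, so ``satisfying the matrix inequality'' must be read with strict inequalities on the rationalizing rows. To render the criterion an effective test I would convert it to a linear program by introducing a single margin variable $\varepsilon$, replacing each strict row $a^\top\theta>0$ by $a^\top\theta\ge\varepsilon$ and asking whether the maximum of $\varepsilon$ subject to $\theta\ge 0$ and a normalization such as $\varepsilon\le 1$ is positive; rescaling $\theta$ shows this LP has positive value exactly when the original strict system is solvable. This reformulation is what makes the condition checkable in principle, even though simply writing down all the rows can be prohibitively expensive as the family $\A$ of menus grows.
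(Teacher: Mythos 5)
Your proposal is correct and follows essentially the same route as the paper: both unfold Definition~\ref{def:addpref} applied to each pair (menu, unchosen alternative) into rows of $\{-1,0,+1\}$ coefficients acting on the stacked vector of $u$ and $v$ values, and both directions of the equivalence then follow immediately from the construction. Your added remarks on handling the strict inequalities via a margin variable to get a genuine linear program go slightly beyond what the paper's proof records, but they refine rather than change the argument.
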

\begin{proof}
For ease of notation, index the elements of $X = \{x_1, \ldots, x_n\}$.  Then for each $A \in \A$, the agent chooses some $x_i = c(A) \in A$.  By the definition of additive preferences, every $x_j \in A$ for $j \neq i$ was \emph{not} chosen because
\[u(x_i) - \sum_{z \in A \setminus \{x_i\}} v(x_i,z) > u(x_j) - \sum_{z \in A \setminus \{x_j\}} v(x_j,z) \]
Rearranging terms gives,
\begin{equation}\label{eq:obs} u(x_i) - u(x_j) + \sum_{z \in A \setminus \{x_j\}} v(x_j,z) - \sum_{z \in A \setminus \{x_i\}} v(x_i,z) > 0 \end{equation}
We can instantiate a matrix $T$ to record the inequalities implied by all observed choices.   This matrix $T$ will have $n^2$ columns, where the first $n$ columns correspond to elements $x_1, \ldots, x_n \in X$, and the remaining $n^2 - n$ columns correspond to ordered pairs $(x_i, x_j)$ of elements in $X$, for $i \neq j$.\footnote{There are only $n^2 - n$ columns because we do not compare elements to themselves}  $T$ will contain a row for each triple $(A, x_i, x_j)$, where $A \in \A$, and $x_i, x_j \in A$.  If the agent is observed to choose $x_i = c(A)$, then Equation~\eqref{eq:obs} must be true for each $x_j \in A$ for $j \neq i$.  To encode this inequality, then for each such $x_j$ that was not chosen from $A$, we fill in the row corresponding to $(A, x_i, x_j)$ as follows: enter $+1$ in the $i^{th}$ column, $-1$ in the $j^{th}$ column, $+1$ in columns corresponding to pairs $(x_j, z)$ where $z \in A$, $-1$ in columns corresponding to pairs $(x_i, z)$ where $z \in A$, and zeros elsewhere.

To complete the encoding, we also require a vector $\vec{u}$, which contains the values of $u(\cdot)$ and $v(\cdot, \cdot)$ evaluated on all elements of $X$.  The first $n$ entries of $\vec{u}$ will contain the values of $u(x_1), \ldots, u(x_n)$, and the remaining $n^2 -n$ entries will contain the values $v(x_i, x_j)$ for $i\neq j$, in the same order in which the pairs appear in the columns of $T$.  Then multiplying $T \vec{u}$, each row of the product would equal
\[ u(x_i) - u(x_j) + \sum_{z \in A \setminus \{x_j\}} v(x_j,z) - \sum_{z \in A \setminus \{x_i\}} v(x_i,z) \]
for some set $A \in \A$, observed choice $x_i = c(A)$, and unchosen element $x_j \in A$.  Finally, we need the constraint that each row of $T \vec{u}$ is greater than zero, as required in Equation~\ref{eq:obs}.  That is,
\begin{equation}\label{eq:matrix} T \vec{u} > \vec{0} \end{equation}
If such a vector $\vec{u}$ exists, then there exist functions $u:X \rightarrow \mathbb{R}^+$ and $v:X \times X \rightarrow \mathbb{R}^+$ such that additive privacy preferences are optimized by the observed choices, and the thus observed choices are rationalizable.
\end{proof}

\section{Higher order privacy preferences}
\label{s.higherorder}

The results we have discussed so far are predicated on the notion that
the agent thinks that the observer is naive. We shall now relax the
assumption of naivete. We are going to
allow the agent to believe that the observer thinks that she is privacy
aware.

Going back to Alice, who is choosing among political
blogs, suppose that she reasons as follows. Alice may realize that her
observed choices violate the strong axiom of revealed preference and
therefore cannot correspond to the choices of a rational agent. This
could tip off the observer to the fact that she is privacy aware. We
have seen that privacy awareness is a plausible explanation for
violations of the revealed preference axioms. So Alice could now be
concerned about the observer's inference about her preferences over
objects and over revealed preference. Perhaps she thinks that the
observer will infer that she is avoiding blog $a$ because of what it
reveals about her, and that fact itself is something she does not wish
be  known. After all, if Alice has a preference for privacy, perhaps
she has something to hide.

More generally,  an agent may be
concerned not only about what her behavior reveals about her
preferences over $X$, but also about what her behavior reveals of her
preferences for privacy.  She may then make choices to minimize
inferences the observer is able to make about her preferences for
privacy, as well as her preferences over $X$.

To provide a model that incorporates such issues,
we define a hierarchy of higher order preferences, called {\em
  level-$k$ preferences}, where a level-$k$ consumer is aware that the
observer may make inferences about her level-$(k-1)$ privacy
preferences, and has preferences over the information the observer can
infer. In our construction, level-$0$ corresponds to the classical
privacy-oblivious setting, and the setting we have considered to this
point is that of a level-$1$ consumer (Sections~\ref{s.modeling} and
\ref{s.mainresults}).

The meaning of such levels should be clear. If Alice is concerned
about facing an observer who makes level $k$ inferences, then her
behavior will be dictated by the level $k+1$ model. To emphasize a
point we have made repeatedly, {\em the real observer may be as
sophisticated as one wants,} but Alice thinks that the observer thinks
that Alice thinks that the observer thinks
that Alice thinks \ldots that the observer makes inferences based on
revealed preferences.

\subsection{Level-$k$ privacy preferences}
\label{ss.levelk}

To formally define a ``cognitive hierarchy'' for privacy-aware
consumers we use the following sequence of sets, $\Y^k$ for $k \geq
0$.  $\Y^0 = X$,
$\Y^1 = X\times \B(\Y^0),$ and let
$\Y^k = X\times \B(\Y^{k-1})$.
A level-$k$ privacy preference can then be defined as a binary relation $\succeq^k$ over  $\Y^k = X\times \B(\Y^{k-1})$. That is, $\succeq^k$ describes preferences over pairs of objects $x \in X$ and the set of level-$(k-1)$ preferences that are revealed from the choice of $x$.

Given the results in Section \ref{s.mainresults}, our focus is on monotone, separable privacy preferences, and so we can extend the notion of monotonicity discussed in Section \ref{s.mainresults} to level-$k$ privacy preferences as follows.

\begin{definition}
A \textbf{monotone level-$k$ privacy preference} is a binary relation $\succeq^k$ over $\Y^k = X\times \B(\Y^{k-1})$ such that
\begin{enumerate}
\item $\succeq^k$ is a linear order, and
\item $B\subsetneq B'$ implies that $(x,B)\succ (x,B')$, for all $B,B'\in\B(\Y^{k-1})$.
\end{enumerate} For this definition to hold for level-0, we define $\Y^{-1}$ to be the empty set.
\end{definition}

Similarly, we extend the notion of separability to level-$k$ privacy preferences as follows.

\begin{definition}
A \textbf{separable level-$k$ privacy preference} is a binary relation $\succeq^k$ over $\Y^k = X\times \B(\Y^{k-1})$ such that it is monotone and additionally satisfies for any $B\in\B(\Y^{k-1})$,
\[(x, B) \succeq^k (y,B) \Longrightarrow (x, B^{\prime}) \succeq^k (y,B^{\prime}) \; \forall B^{\prime}\in\B(\Y^{k-1})\]
\end{definition}

Given the notion of level-$k$ privacy preferences, we need to characterize how an observer will make inferences from observed choices.  Naturally, the exact information inferred will depend on the level which the observer believes the privacy preferences to be.
For example, if the observer believes the consumer to have level-$0$ preferences, the information inferred by the observer is the set
\[A_x =  \{(x,y) : y\in A\setminus \{x\}\},\]
which is a binary relation over $X$. So $A_x\in \B(\Y^0)$.
However, if the observer believes the consumer to have level-$1$ preferences, the information inferred by the observer is the set
\[\{((x,A_x), (y,A_y)) : y\in A\setminus \{x\}\}\in \B(\Y^1).\]

More generally, to describe the observer's inferences under the the belief that the consumer is level-$k$,  we introduce the following notation. Consider the functions $T^k:\A\times X\rightarrow \B(\Y^k)$, for $k\geq 0$. Let
\begin{align*}
T^0(A,x) & = \{(x,y) : y\in A\setminus \{x\}\} \in \B(\Y^0) \\
T^1(A,x) & = \{\left( (x,T^0(A,x)), (y,T^0(A,y)) \right) : y\in A\setminus \{x\}\} \in \B(\Y^1) \\
\vdots \mbox{ }& \mbox{ }\;\;\;\,\, \vdots \\
T^k(A,x) & = \{\left( (x,T^{k-1}(A,x)), (y,T^{k-1}(A,y)) \right) : y\in A\setminus \{x\}\}\in \B(\Y^k).
\end{align*}

In words, $T^k(A,x)$ are the level-$k$ preferences (over alternatives
in $A$ and set of level-$(k-1)$ preferences that will be inferred from
each choice) that would cause the agent to choose $x$ from the set
$A$.  Then generally, a level-$k$ agent making choice $x=c(A)$ must
have $T^k(A,x)$ as a subset of her level-$k$ preferences.

\subsubsection*{Example: Level-$2$ privacy preferences}

In order to illustrate the cognitive hierarchy more concretely it is
useful to describe the case of level-$2$ privacy preferences in
detail.
Recall that the level-$0$ privacy preferences are the
classical setting of privacy-oblivious consumers and level-$1$ privacy
preferences are the case we study in Sections~\ref{s.modeling}
and~\ref{s.mainresults}.  As we shall see, there is a sense in which
level-$2$ is all that is needed.

Continuing with the story about Alice, we remarked how she could come
to question her level-1 behavior because she should realize that there
is something suspicious about her choices violating the revealed
preference axioms. As the result of such a realization, she might
entertain level-$2$ behavior. She might think that the observer thinks
that she is level-$1$. Now, there is no reason for her to go any
further because, in contrast with level-$1$, {\em nothing could give
  her away.}

While her violations of the revealed preference axioms
indicate that she cannot be level-$0$, given our
Proposition~\ref{prop:anyc},
nothing about her behavior could contradict that she is level-$1$.  She has no reason to think that reasoning beyond level-$2$ will afford her more privacy - we have already seen that nothing in her behavior that could prove to the observer that she is not level-$1$.

More concretely, suppose that $x$ is chosen from set $A$. The observer, who thinks the consumer is at level-$1$, infers the level-$1$ preferences
\[ (x,A_x) \succ (z,A_z) \; \; \forall \; z\in A\setminus\{x\},\]
or, more specifically, that her level-$1$ privacy preferences correspond to the binary relation, \begin{equation}\label{eq:level2}
\bigcup \left\{ [(x,A_x), (z,A_z)] : z\in A\setminus\{x\} \right\}.
\end{equation}

Now the agent who believes that
the observer will make such an inference, will only choose $x$
when this choice {\em together with inferences revealed by the choice}
is better than the choice of another alternative in $A$ with its
accompanying inferences. That is, she will choose $x$ over $y$ in $A$
whenever the choices of $x$ {\em together} with the release of the
information in Equation~\eqref{eq:level2} is preferred to the choice
of $y$ together with the information,
\[ \bigcup \{ [(y,A_y), (z,A_z)] : z\in A\setminus\{y\} \}. \]

That is, if a level $2$ agent chooses $x$ from set $A$, she knows that observer will make inferences according to Equation~\eqref{eq:level2}.  Then her choice of $x$ must maximize her preferences over outcomes \emph{and} these known inferences that will be made.  Specifically, she will choose $x$ if her level-$2$ preferences are, for all available $y \in A$,
\begin{equation}\label{eq:level1rev} (x, \cup \{ [(x,A_x), (z,A_z)] : z\in A\setminus\{x\} \}) \succ (y, \cup \{ [(y,A_y), (z,A_z)] : z\in A\setminus\{z\} \}) \end{equation}

Using the notation defined earlier in this section, we can re-write Equation~\eqref{eq:level1rev} as a binary relation,
\[[(x, T^1(A,x)), (y, T^1(A,y))]. \]

Since the same can be said for {\em every} available alternative $y\in A$ that was not chosen, the following must be a part of the agent's level-$2$ preferences
\[ T^2(A,x) = \{\left[ (x,T^1(A,x)), (y,T^1(A,y)) \right] : y\in A\setminus \{x\}\} \]
Note, however, that the observer does not get to infer $T^2(A,x)$.  He believes the agent to have level-$1$ preferences, and upon seeing $x=c(A)$, he infers $T^1(A,x)$.  This is why the agent chooses $x \in A$ to optimize her preferences over $X$ \emph{and} sets of the form $T^1(A,\cdot )$.

\subsection{The rationalizability of level-$k$ preferences}

Given the notion of a privacy-aware cognitive hierarchy formalized by
level-$k$ privacy preferences, we are now ready to move on to the task
of understanding the empirical implications of higher order reasoning
by privacy-aware consumers. To do this, we must first adapt the notion
of rationalizability to level-$k$ reasoning.  For this, the natural
generalization of Definition \ref{def:rational} to higher order
reasoning is as follows.  This definition reduces to Definition
\ref{def:rational} when level-$1$ is considered, and to the classical
definition of rationalizable in the privacy-oblivious case when
level-$0$ is considered.

\begin{definition}
A choice $(X,\A,c)$ is \textbf{level-$k$ rationalizable} if there is a level-$k$ privacy preference $\succeq^k\in \B(\Y^k)$ such that for all $A\in\A$, $T^k(A,c(A))\subseteq \succeq^k$.
\end{definition}

Given this definition, we can now ask the same two questions we
considered in Section \ref{s.mainresults} about level-$k$ privacy
preferences:  ``What are the empirical implications of level-$k$
privacy preferences?'' and ``When can the observer learn the
underlying choice preferences of consumers?''  Our main result is
the following theorem, which answer these questions.

\begin{theorem} \label{thm:main}
Let $(X,\A,c)$ be a choice problem. Let $k>0$ and $\succeq$ be any linear order over $X$. Then there is a monotone, separable level-$k$ privacy preference $\succeq^*$ that level-$k$ rationalizes $(X,\A,c)$ and such that:
\[ x\succeq y \text{ iff } (x, B) \succeq^* (y,B) \text{ for all } B\in\B(\Y^{k-1}). \]
\end{theorem}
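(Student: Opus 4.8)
The plan is to mimic the proof of Proposition~\ref{prop:anyc}: build an auxiliary relation $E$ on $\Y^k = X\times\B(\Y^{k-1})$, prove it is acyclic, and invoke Lemma~\ref{lem:spil} to extend it to a linear order $\succeq^*$. The relation has three kinds of edges. First, \emph{monotone} edges $(x,B)\mb E (x,B')$ whenever $B\subsetneq B'$, which force monotonicity. Second, \emph{order} edges $(x,B)\mb E (y,B)$ whenever $x\succ y$ in the given linear order, for every $B\in\B(\Y^{k-1})$; these simultaneously force separability and pin the projection to $\succeq$. Third, \emph{rationalizing} edges: for each $A\in\A$ with $x=c(A)$ and each $y\in A\setminus\{x\}$, the edge $(x,T^{k-1}(A,x))\mb E (y,T^{k-1}(A,y))$, i.e.\ exactly the pairs belonging to $T^k(A,x)$. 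Any linear order $\succeq^*\supseteq E$ will then be the desired preference.

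First I would check that a linear extension $\succeq^*$ of $E$ has all the claimed properties, which is routine once acyclicity is in hand. Monotonicity is immediate from the monotone edges. For the projection identity, note that since $\succeq$ is total, for $x\neq y$ the order edges put $(x,B)\succ^*(y,B)$ for all $B$ exactly when $x\succ y$; hence $x\succeq y$ iff $(x,B)\succeq^*(y,B)$ for all $B$, which is both the stated equivalence and, since the comparison of $x$ and $y$ is now the same across all $B$, separability. Level-$k$ rationalizability holds because the rationalizing edges are precisely the elements of $T^k(A,c(A))$, so $T^k(A,c(A))\subseteq E\subseteq\succeq^*$ for every $A$.

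The whole weight of the argument therefore falls on showing $E$ is acyclic. Here I would first observe a cardinality invariant for the $B$-component: monotone edges strictly increase $|B|$, order edges leave $B$ unchanged, and rationalizing edges preserve $|B|$ because $|T^{k-1}(A,x)| = |A|-1$ independently of which element $x\in A$ is chosen. Consequently no cycle can contain a monotone edge, and every cycle lives among order and rationalizing edges at a fixed value of $|B|$. A cycle using only order edges is impossible, since along it $B$ is constant and the $X$-coordinates would form a strictly $\succ$-descending closed chain. So any remaining cycle must use at least one rationalizing edge.

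The crux --- and the step I expect to be the main obstacle --- is an injectivity lemma: for $k\geq 1$ and $|A|\geq 2$, the set $T^{k-1}(A,y)$ determines both $A$ and $y$. The point is that every element of $T^{k-1}(A,y)$ shares the same first $X$-label $y$, while the second $X$-labels range over $A\setminus\{y\}$, so $y$ and $A$ can be read off using only the outermost $X$-coordinates, uniformly in $k$. Granting this, consider a rationalizing edge $(x,T^{k-1}(A,x))\mb E (y,T^{k-1}(A,y))$ with $x=c(A)$ and $y\neq x$. No vertex of the form $(w, T^{k-1}(A,y))$ can be the source of a rationalizing edge: such a source would require $T^{k-1}(A,y)=T^{k-1}(A',w)$ with $w=c(A')$, and injectivity forces $A'=A$ and $w=y$, whence $y=c(A)$, contradicting $y\neq c(A)$. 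Since order edges never change the $B$-component, once a cycle traverses this rationalizing edge its $B$-component is frozen at $T^{k-1}(A,y)$ and can never return to the distinct value $T^{k-1}(A,x)$ of the edge's source, so the cycle cannot close. This contradiction establishes acyclicity and completes the proof. The delicate part is precisely the injectivity lemma together with its consequence that intervening order edges cannot ``repair'' a $B$-component so as to enable a second rationalizing edge; everything else is bookkeeping.
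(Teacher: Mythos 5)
Your proposal is correct, and its skeleton is exactly the paper's: the same directed graph on $X\times\B(\Y^{k-1})$ with monotone, separable (your ``order''), and rationalizing edges, the same cardinality argument eliminating monotone edges from cycles, and the same appeal to Lemma~\ref{lem:spil} at the end. Where you genuinely depart is the heart of the acyclicity argument. The paper proceeds by case analysis on consecutive edge types: it rules out two sequential rationalizing edges, collapses sequential separable edges by transitivity so that cycles may be assumed alternating, rules out rationalizing--separable--rationalizing subpaths, and finally disposes of the remaining length-two cycles by traversing them twice. You instead isolate an explicit injectivity lemma --- $T^{k-1}(A,y)$ determines $(A,y)$, read off from the outermost $X$-labels, uniformly in $k$ --- and derive from it one structural fact: no rationalizing edge can originate at a vertex whose second component is $T^{k-1}(A,y)$ with $y\neq c(A)$. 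Since order edges freeze the second component, a cycle that crosses a rationalizing edge can never restore it, so no cycle can close. This single observation subsumes all of the paper's cases, each of which, on inspection, bottoms out in the same injectivity of $T^{k-1}$ --- a fact the paper merely asserts (``$T$ is one-to-one'') without the proof you supply. Your route buys brevity and conceptual economy (no alternation reduction, no doubling trick, and the load-bearing fact is stated and proved rather than assumed); the paper's version buys purely local verification on short subpaths, at the cost of a longer case analysis. Both are sound, and your handling of the side conditions (the $|A|\geq 2$ hypothesis holding automatically for rationalizing edges, and order-only cycles dying on the strictness of $\succ$) is careful.
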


\begin{proof}

Let $T^{k-1}: \A\times X \rightarrow \B(\Y^{k-1})$ be as defined in
Section~\ref{ss.levelk}.  For shorthand, write $\Y$ for $\Y^{k-1}$ and $T$ for $T^{k-1}$. Then $T$ describes the set of level-$(k-1)$ preferences inferred by the observer as a result of the agent's choice behavior.  That is, when the agent chooses $x=c(A)$, the observer will infer all preferences in the set $T(A,x)$.
Note that $T$ is one-to-one and satisfies the following property: for all $A\in \A$ and all $x,x'\in A$,
\begin{equation}\label{eq:propT}|T(A,x)| = |T(A,x')|. \end{equation}
Property~\eqref{eq:propT} follows because the number of pairs$((x,T(A,x)), (y,T(A,y)))$ $y\in A\setminus \{x\}\}$ is the same for any $x\in A$.

We now construct a binary relation $E$ over $X \times \B(\Y)$.  As in the proof of Proposition~\ref{prop:anyc}, it will be useful to consider $E$ as the edges of a directed graph $G=(V,E)$, where $V = X\times \B(\Y)$.  We create edges in $E$ according to the desiderata of our privacy-aware preferences: monotone, separable, and rationalizing choice behavior.  Define $E$ as follows: $(x,B) \mb E (x',B')$ if either (1) $x=x'$ and $B\subsetneq B'$, (2) $(x,B) \mb E (x^{\prime},B)$, where $x\succ x^{\prime}$ according to linear order $\succeq$, or (3) $x\neq x'$ and there is $A\in \A$ with $x=c(A)$, $x'\in A$ and $B=T(A,x)$ while $B'=T(A,x')$.  We will call these edges respectively ``monotone,'' ``separable,'' and ``rationalizing,'' as a reference to the property they are meant to impose.  By Lemma~\ref{lem:spil}, we are done if we show that $E$ is acyclic.

Assume towards a contradiction that there is a cycle in this graph. Then there exists a sequence $j=1, \ldots, K$ such that \[(x^1, B^1) \mb E (x^2, B^2) \mb E \cdots \mb E (x^K, B^K) \text{ and }(x^K, B^K) \mb E (x^1, B^1).\]

For any monotone edge $(x^i,B^i) \mb E (x^{i+1},B^{i+1})$, it must be the case that $|B^i| < |B^{i+1}|$ since $B^i \subset B^{i+1}$.  If this were a separable edge, then $B^i = B^{i+1}$, so $|B^i| = |B^{i+1}|$.  Similarly, for any rationalizing edge, $|B^i| = |T(A,x)| = |T(A,x')| = |B^{i+1}|$.  Thus as we traverse any path in this graph, the size of the second component is non-increasing along all edges, and strictly decreasing along monotone edges.  This implies that there can be no cycles containing monotone edges, and our assumed cycle must consist entirely of rationalizing and separable edges.

If there are two sequential rationalizing edges in this cycle, then there exists a $j$ and some $A \in \A$ such that \[(x^j, T(A,x^j)) \mb E (x^{j+1}, T(A, x^{j+1})) \mb E (x^{j+2}, T(A, x^{j+2})),\] where $x^j, x^{j+1}, x^{j+2} \in A$.  From the first edge, $x^j = c(A)$ in some observation, and from the second edge, $x^{j+1} = c(A)$ in another observation.  If $x^j \neq x^{j+1}$ then $c(A) = x^j \neq x^{j+1} = c(A)$ which contradicts the uniqueness of choice imposed by the linear ordering.  If $x^j = x^{j+1}$, then $(x^j, T(A,x^j)) = (x^{j+1}, T(A, x^{j+1}))$, which implies that an element of $X \times \B(\Y)$ is strictly preferred to itself, which is a contradiction. Thus no cycle can contain two sequential rationalizing edges.

If there are two sequential separable edges in our cycle, then there exists a $j$ such that
\[(x^j, B^j) \mb E (x^{j+1}, B^{j+1}) \mb E (x^{j+2}, B^{j+2}),\] where $B^j = B^{j+1} = B^{j+2}$ and $x^j\succ x^{j+1}\succ x^{j+2}$. By transitivity, $x^j\succ x^{j+2}$, so there must also be a separable edge in the graph $(x^j, B^j) \mb E (x^{j+2}, B^{j+2})$.  If the cycle we have selected contains two sequential separable edges, then there must exist another cycle that is identical to the original cycle, except with the sequential separable edges replaced by the single separable edge.  Thus we can assume without loss of generality that the cycle we have selected does not contain two sequential separable edges.\footnote{The only time this is with some loss of generality is when there is a cycle containing only separable edges.  By assumption, $\succeq$ is a linear order over $X$ and must be acyclic, so this is not possible.}

Given the previous two observations, we can assume without loss of generality that the cycle $j=1, \ldots, K$ contains alternating rationalizing and separable edges.  This includes the endpoints $j=K$ and $j=1$ since they too are connected by an edge.

If there is a path in the graph containing sequential rationalizing, separable, and rationalizing edges, then there exists $A, A^{\prime} \in \A$ and a $j$ such that \[ (x^j, T(A, x^j)) \mb E (x^{j+1}, T(A, x^{j+1})) \mb E (x^{j+2}, T(A^{\prime}, x^{j+2})) \mb E (x^{j+3}, T(A^{\prime}, x^{j+3})), \] where $x^j, x^{j+1} \in A$, $x^{j+2}, x^{j+3} \in A^{\prime}$, and choices $x^j = c(A)$ and $x^{j+2} = c(A^{\prime})$ are observed.  Since the middle edge is separable, it must be that $T(A,x^{j+1}) = T(A^{\prime}, x^{j+2})$, so $x^{j+1} = x^{j+2}$ and $A = A^{\prime}$.  However, this means that $(x^{j+1}, T(A, x^{j+1}))$ is strictly preferred to itself, which is a contradiction, so no such path in the graph can exist.

Since edges must alternate between rationalizing and separable, this leaves the only possible cycles to be of length two, containing one rationalizing edge and one separable edge.  However, if such a cycle existed, then traversing the cycle twice would yield a path containing sequential rationalizing, separable, and rationalizing edges, which has been shown to not exist in this graph.

Thus we can conclude that $E$ must be acyclic, which completes the proof.
\end{proof}

\section{Related work}

The growing attention to privacy concerns has led to a growing literature studying privacy, see \cite{heffetz2013privacy} for a survey.  Within this literature, an important question is how to model the preferences or utilities of privacy-aware agents in a way that describes their behavior in strategic settings.

One approach toward this goal, exemplified by \cite{GR11}, \cite{NOS12}, \cite{CCKMV13}, \cite{Xiao13}, and \cite{NVX14}, is to use differential privacy in mechanism design as a way to quantify the privacy loss of an agent from participating the mechanism.  Within this literature, each of \cite{GR11}, \cite{NOS12}, and \cite{Xiao13} assume that the utility of a privacy-aware agent is her gain from the outcome of the interaction minus her loss from privacy leakage.  Note that this is a stronger condition than separability, as defined in Section~\ref{s.separable}, and a weaker condition than additivity, as defined in Section~\ref{s.additive}. In contrast,   \cite{CCKMV13} and \cite{NVX14} make the same separability assumption as used in this paper, but \cite{CCKMV13} allows for non-monotone privacy preferences and \cite{NVX14} uses a relaxed version of monotonicity.

Perhaps the model closest to ours is \cite{Gra13}, which also considers privacy-aware agents with preferences over outcome-privacy pairs.  However, the technical quantification of privacy is different in the two models, as \cite{Gra13} considers multiple agents engaging in a single interaction instead of multiple choices by a single agent as in the current paper.  In addition, the nature of the results in \cite{Gra13} are different from ours: it studies implementation from a mechanism design perspective, while we study the testable implications of privacy-aware preferences through the lens of revealed preference analysis.

\section{Concluding remarks}
\label{s.conclusion}

We conclude by describing what our results mean for Alice's story, and
for future research on privacy.

Alice makes choices knowing that she is being observed. She thinks
that the observer uses revealed preference theory to infer her
preferences. She might think that the observer is not sophisticated,
and uses revealed preferences naively to infer her preferences over
objects. Alternatively, she might think that the observer is sophisticated, and
knows that she has preferences for privacy; in this case, the
observer tries to infer (again using revealed preferences) Alice's
preferences for objects and privacy.

The story of Alice, however, is more ``The Matrix'' than ``in
Wonderland.'' Alice believes that she is one step ahead of the
observer, and makes choices taking into account what he learns about
her from her choices. In reality, however, the observer is us: the
readers and writers of this paper.

{\em We} are trying to understand Alice's behavior, and to infer what
her preferences over objects might be. The main result of our work is
that such a task is hopeless. Any behavior by Alice is consistent with
any preferences over objects one might conjecture that she has (and
this is true for any degree of sophistication that Alice may have in
her model of what the observer infers from her).
Other observers on the internet, such as Google or the NSA, would have
to reach the same conclusion.

One way out is to impose additional structure on Alice's
preferences. The main result uses separability and monotonicity, which are
strong assumptions in many other environments, but that is not
enough. We have suggested additive privacy preferences
(Section~\ref{s.additive}) as a potentially useful model to
follow. Additive preferences do impose observable restrictions on
choice, and its parameters could be learned or estimated from choice
data. Privacy researchers looking to model a utility for privacy
should consider the additive model as a promising candidate.

\bibliography{privacy}

\begin{thebibliography}{10}

\bibitem{CCKMV13}
Yiling Chen, Stephen Chong, Ian~A. Kash, Tal Moran, and Salil Vadhan.
\newblock Truthful mechanisms for agents that value privacy.
\newblock In {\em Proceedings of the 14th ACM Conference on Electronic
  Commerce}, EC '13, pages 215--232, 2013.

\bibitem{GR11}
Arpita Ghosh and Aaron Roth.
\newblock Selling privacy at auction.
\newblock In {\em Proceedings of the 12th ACM Conference on Electronic
  Commerce}, EC '11, pages 199--208, 2011.

\bibitem{Gra13}
Ronen Gradwohl.
\newblock Privacy in implementation.
\newblock In {\em CMS-EMS Discussion Paper 1561}, 2013.

\bibitem{heffetz2013privacy}
Ori Heffetz and Katrina Ligett.
\newblock Privacy and data-based research.
\newblock Technical report, National Bureau of Economic Research, 2013.

\bibitem{mas1995}
Andreu Mas-Colell, Michael~D. Whinston, and Jerry~R Green.
\newblock {\em Microeconomic theory}, volume~1.
\newblock Oxford university press New York, 1995.

\bibitem{NOS12}
Kobbi Nissim, Claudio Orlandi, and Rann Smorodinsky.
\newblock Privacy-aware mechanism design.
\newblock In {\em Proceedings of the 13th ACM Conference on Electronic
  Commerce}, EC '12, pages 774--789, 2012.

\bibitem{NVX14}
Kobbi Nissim, Salil Vadhan, and David Xiao.
\newblock Is privacy compatible with truthfulness?
\newblock In {\em Proceedings of the 4th Innovations in Theoretical Computer
  Science}, ITCS '14, 2014.
\newblock To appear.

\bibitem{pewsurvey}
Lee Rainie, Sara Kiesler, Ruogu Kang, and Mary Madden.
\newblock Anonymity, privacy and security online.
\newblock Technical report, Pew Research Center, 2013.

\bibitem{rubinstein2012lecture}
Ariel Rubinstein.
\newblock {\em Lecture notes in microeconomic theory: the economic agent}.
\newblock Princeton University Press, 2012.

\bibitem{simonson1992choice2}
Itamar Simonson and Amos Tversky.
\newblock Choice in context: tradeoff contrast and extremeness aversion.
\newblock {\em Journal of marketing research}, 1992.

\bibitem{varia82}
Hal~R. Varian.
\newblock The nonparametric approach to demand analysis.
\newblock {\em Econometrica}, 50(4):945--974, Jul 1982.

\bibitem{varian2006revealed}
Hal~R. Varian.
\newblock {Revealed preference}.
\newblock {\em Samuelsonian economics and the twenty-first century}, pages
  99--115, 2006.

\bibitem{Xiao13}
David Xiao.
\newblock Is privacy compatible with truthfulness?
\newblock In {\em Proceedings of the 4th Innovations in Theoretical Computer
  Science}, ITCS '13, pages 67--86, 2013.

\end{thebibliography}
\bibliographystyle{plain}

\end{document}